\newtheorem{theorem}{Theorem}[section]
\def\BibTeX{{\rm B\kern-.05em{\sc i\kern-.025em b}\kern-.08em
    T\kern-.1667em\lower.7ex\hbox{E}\kern-.125emX}}
\begin{document}

\title{Privacy-Preserving Quantum Annealing for Quadratic Unconstrained Binary Optimization (QUBO) Problems}

\author{
\IEEEauthorblockN{ Moyang Xie$^1$, Yuan Zhang$^1$, Sheng Zhong$^1$, Qun Li$^2$ }
\IEEEauthorblockA{
$^1$\textit{State Key Laboratory for Novel Software Technology, Nanjing University,} Nanjing, China \\
$^2$\textit{Department of Computer Science, William \& Mary,} Williamsburg, VA, USA \\
xie\_moyang@foxmail.com, zhangyuan@nju.edu.cn, zhongsheng@nju.edu.cn, liqun@cs.wm.edu
}

}

% \author{\IEEEauthorblockN{Moyang Xie}
% \IEEEauthorblockA{
% \textit{State Key Laboratory for Novel Software Technology} \\
% \textit{Nanjing University}\\
% Nanjing, China \\
% xie\_moyang@foxmail.com
% }
% \and
% \IEEEauthorblockN{Yuan Zhang}
% \IEEEauthorblockA{
% \textit{State Key Laboratory for Novel Software Technology} \\
% \textit{Nanjing University}\\
% Nanjing, China \\
% zhangyuan@nju.edu.cn
% }
% \and
% \IEEEauthorblockN{Sheng Zhong}
% \IEEEauthorblockA{
% \textit{State Key Laboratory for Novel Software Technology} \\
% \textit{Nanjing University}\\
% Nanjing, China \\
% zhongsheng@nju.edu.cn
% }
% \and
% \IEEEauthorblockN{Qun Li}
% \IEEEauthorblockA{
% \textit{Department of Computer Science} \\
% \textit{William \& Mary}\\
% Williamsburg, VA, USA\\
% $\text{liqun@cs.wm.edu}$
% } 
% }

\maketitle

\begin{abstract}

Quantum annealers offer a promising approach to solve Quadratic Unconstrained Binary Optimization (QUBO) problems, which have a wide range of applications. However, when a user submits its QUBO problem to a third-party quantum annealer, the problem itself may disclose the user's private information to the quantum annealing service provider. To mitigate this risk, we introduce a privacy-preserving QUBO framework and propose a novel solution method. Our approach employs a combination of digit-wise splitting and matrix permutation to obfuscate the QUBO problem's model matrix $Q$, effectively concealing the matrix elements. In addition, based on the solution to the obfuscated version of the QUBO problem, we can reconstruct the solution to the original problem with high accuracy. Theoretical analysis and empirical tests confirm the efficacy and efficiency of our proposed technique, demonstrating its potential for preserving user privacy in quantum annealing services.
%, and successfully hides
%Our method achieves high accuracy and efficiency while providing users with the flexibility to strike a balance between accuracy, overhead, and security by tuning relevant parameters. We have performed both theoretical and experimental evaluations of our method, and the results demonstrate that it successfully hides 
%the values of the elements in the model matrix, allowing the user to protect its privacy.

\end{abstract}

\begin{IEEEkeywords}
Privacy-Preserving, Quadratic Unconstrained Binary Optimization, Quantum Annealing.
\end{IEEEkeywords}

\section{Introduction}\label{sec:int}

Leveraging the unique quantum mechanics, such as superposition, entanglement, and quantum interference, quantum computers provide a new computing paradigm for many new applications. 
%can solve many hard problems much faster than classical computers. 
%In particular, a specialized form of quantum computing called quantum annealing can solve optimization problems based on a process of simulated annealing. 
For example, quantum annealers, based on a process of simulated annealing, can search the solution space efficiently and find the optimal solution to optimization problems~\cite{das2008colloquium} that are difficult for classical computers to solve. 

In particular, quantum annealers are known for effectively solving the Quadratic Unconstrained Binary Optimization (QUBO)~\cite{battaglia2005optimization,rajak2023quantum}, which has broad-ranging applications in fields such as finance, economics, etc.~\cite{kochenberger2014unconstrained}. 
As an NP-hard problem, QUBO poses a significant computational challenge to classical computers. It has been applied to various classical problems in theoretical computer science, including the maximum cut, graph coloring, and the partition problem, all of which can be formulated as QUBO embeddings~\cite{glover2019quantum}. Additionally, QUBO has extended its utility to machine learning, with embeddings designed for support-vector machines, clustering algorithms, and probabilistic graphical models~\cite{mucke2019learning,date2021qubo}. The close relationship between QUBO and Ising models positions QUBO as a central problem class in adiabatic quantum computation, where solutions are sought through the physical process known as quantum annealing. A diverse array of industry leaders, including D-Wave, Google, IBM, Microsoft, and Amazon, alongside esteemed public institutions such as Oak Ridge National Laboratory, Los Alamos National Laboratory, and NASA's Ames Research Center, have dedicated substantial resources to exploring the utilization of quantum computers for resolving QUBO problems~\cite{glover2019quantum}. This extensive body of research underscores the immense value and untapped potential of leveraging quantum computational resources for QUBO problem-solving.

Unfortunately, to date, it is hard for ordinary individuals or even companies to have their own quantum computer. Noticing this unmet demand, companies such as D-Wave have started to provide cloud services of quantum annealing. Although this cloud-based service is convenient for users to solve their QUBO problems remotely, it also brings immediate privacy concerns. Specifically, the QUBO instances stemming from users' real-world challenges may encapsulate sensitive information, ranging from financial data and corporate secrets to personal health records. Consequently, when users upload their QUBO problems, there exists a risk of privacy infringement by the service provider, who could potentially scrutinize these problems. Hence, the development of a robust method to safeguard user privacy while enabling utilization of third-party quantum annealing services for solving QUBO problems is paramount.

This paper addresses the privacy-preserving QUBO solving problem by introducing a novel approach. Our method generates an obfuscated version of the original QUBO problem, which, when solved, yields a solution that closely approximates the original problem's solution with high accuracy. Instead of submitting their original problem to the quantum annealing service provider, users can send this obfuscated version. This strategy allows users to leverage the computational power of quantum annealing for obtaining high-quality solutions to their QUBO problems while significantly reducing the risk of privacy breaches. Our approach strikes a balance between utilizing advanced quantum annealing capabilities and safeguarding sensitive information, potentially paving the way for more widespread and secure adoption of cloud-based quantum annealing services.

Our major contributions can be summarized as follows.
\begin{itemize}
    \item We formulate the privacy-preserving QUBO problem and propose an obfuscation solution.
    \item We provide theoretical analyses on the security and efficiency of our method. 
    \item Additionally, we conduct experiments to evaluate the efficacy and efficiency of our proposed technique. Results demonstrate that our method achieves good accuracy.
    \end{itemize}

The rest of this paper is organized as follows. In Sec.~\ref{sec:ppQUBO}, we formulate the privacy-preserving QUBO problem.
In Sec.~\ref{sec:our_app}, we explain our method in details. Theoretical analyses and experimental evaluations are presented in Sec.~\ref{sec:per_ana} and Sec.~\ref{sec:eva_acc} respectively. After discussing related work in Sec.~\ref{sec:rel_wor}, we conclude our paper in Sec.~\ref{sec:con_fut}.
%In this paper, we aim to protect the private information for the users to use the quantum annealing services. Our basic idea is to transform the optimization problem into another problem that will hide the sensitive information. We show that the proposed scheme will provide a certain information protection guarantee and we also show through experiments that the scheme works well in practice.

\section{The Privacy-preserving QUBO Solving Problem}\label{sec:ppQUBO}

The Quadratic unconstrained binary optimization (QUBO) is a model for solving combinatorial optimization problems that can be applied to a variety of important and difficult combinatorial problems such as Multiple Knapsack Problems, P-Median Problems, Maximum Independent Set Problems, Maximum Cut Problems, Graph Coloring Problems, SAT problems and so on. The problem has the following form:

\begin{center}
minimize $y=x^tQx$,
\end{center}
where $x\in\{0,1\}^n$ is an vector of binary elements, and $Q\in \mathbb{R}^{n\times n}$ is the ``model matrix'' which is a square matrix of constants that determines the objective function to be optimized or minimized over all possible values of $x$ in our case.

For example, one might aim to minimize a function $y=f(x_1,x_2,x_3,x_4)$ as follows.

\medskip
$
\begin{aligned}
y & =6x_1^2-18x_3^2-2x_4^2+10x_1x_3-18x_1x_4-6x_2x_3+\\& \ \ \ 4x_2x_4+4x_3x_4\\& = \left(\begin{array}{l}x_{1} \\ x_{2} \\ x_{3} \\ x_{4}\end{array}\right)^T\left(\begin{array}{cccc}6 & 0 & 5 & -9 \\ 0 & 0 & -3 & 2 \\ 5 & -3 & -18 & 2 \\ -9 & 2 & 2 & -2\end{array}\right)\left(\begin{array}{l}x_{1} \\ x_{2} \\ x_{3} \\ x_{4}\end{array}\right).
\end{aligned}
$
\medskip

Accordingly, the above optimization problem can be determined by a square matrix:

\medskip
$
\begin{aligned}
Q & =\left(\begin{array}{cccc}6 & 0 & 5 & -9 \\ 0 & 0 & -3 & 2 \\ 5 & -3 & -18 & 2 \\ -9 & 2 & 2 & -2\end{array}\right),
\end{aligned}
$
\medskip

\noindent and a brute-force search can find the solution to the problem: $x_1=0,\ x_2=1,\ x_3=1,\ x_4=0$.

In this paper, we consider the \textit{privacy-preserving QUBO solving problem} as follows.

%There is a quantum annealing \textit{server} which is capable of solving QUBO problems. From a service user or the \textit{client}, the server accepts a QUBO model, represented by the model matrix, and returns the optimal solution. Suppose a client aims to solve its QUBO problem corresponding to a matrix $Q$ and keep $Q$ private against the server at the same time. To achieve this, the client sends a series of matrices $\{M\}$ to the server and receives the corresponding optimal solutions $\{x'\}$, which are then used to derive the optimal solution $x$ of $Q$. Our goal is twofold. For privacy preservation, we aim to make the server cannot infer $Q$ from $\{M\}$. To solve the client's problem, we aim to derive $x$ from $\{x'\}$ as accurately as possible.

A quantum annealing \textit{server} solves QUBO problems. The \textit{client} submits a QUBO model matrix and receives the optimal solution. The client aims to solve its QUBO problem with matrix $Q$ while keeping $Q$ private from the server. To achieve this, the client sends a series of matrices $\{M\}$ to the server and receives optimal solutions $\{x'\}$, which are used to derive the optimal solution $x$ of $Q$. Our goals are: (1) prevent the server from inferring $Q$ from $\{M\}$ for privacy preservation, and (2) accurately derive $x$ from $\{x'\}$ to solve the client's problem.

%However, the matrix $Q$ and the optimal value $x$ are considered private information that needs to be preserved. The goal is to minimize the amount of information about $Q$ and $x$ that the server can infer during the problem-solving process.

%In this work, we propose a method to address the privacy-preserving QUBO solving problem and briefly analyze its performance. Our approach involves the client hiding sensitive information by representing the elements in the matrix $Q$ in an $r$-based form and splitting them into multiple matrices based on their digits. Each of these matrices is then randomly permuted (by rows and columns) and sent to the server. The server solves the QUBO problems represented by each permuted matrix and returns the corresponding binary solution vector to the client. Based on these multiple binary vectors, the client constructs a probability distribution and samples from it to select the final optimal solution $x$. By obfuscating the original matrix $Q$ through digit-wise splitting and matrix permutation, our approach provides strong security guarantees, ensuring that the server has minimal information about $Q$ and the optimal value $x$. Additionally, our method offers high accuracy and efficiency, and it allows for a flexible trade-off between accuracy, overhead, and security through the choice of parameters.

\section{Our approach}\label{sec:our_app}

%In short, our method is ``digit-wise split, random permute, weighted sum". We achieve privacy preservation by digit-wise splitting, randomly permuting (by rows and columns) the matrices obtained from the splitting and sending the permuted matrices to the server in a random order. These random permutations and the order in which the matrices are sent are not available to the server, making it difficult for the server to recover sensitive information about the model matrix, such as the ratio between any two elements in the matrix, from the received matrices. It also preserves the intrinsic relationship between the model matrix and the matrices sent to the server, i.e., the model matrix can be recovered by summing up these matrices sent to the server with the correct permutations and weights. After the server sends the optimization solutions, some binary vectors, corresponding to the matrices it received to the client, the client gets the final result with high accuracy by making use of the above-mentioned relationship between the matrices corresponding to these binary vectors and the model matrix through inverse permutations and weighted summation. Let's describe some of the details of our approach by dividing it into two parts: how to obtain the matrices to be sent to the server and how to get the final optimal solution $x$ from the binary vectors sent back by the server.

Our approach generates an obfuscated version of the model matrix. We split the QUBO model matrix $Q$ into matrices whose weighted sum equals $Q$. We represent each element of $Q$ using a radix $r$ ($r\in \mathbb{N}$) and split it into a string of $r$-radix digits, decomposing $Q$ into ``digit matrices". Each digit matrix contains elements' digits from the same digit place (e.g., the 1st digit's place, the 2nd digit's place, $\ldots$). 

We randomly permute rows and columns of these matrices and send them to the server in random order. The server cannot recover the original matrix without knowing the permutations or sending order.

Our approach preserves the relationship between the original matrix and sent matrices. The client can reconstruct the original matrix by summing the sent matrices with correct permutations and weights. After the server returns the optimization solutions, which are a series of binary vectors, corresponding to the matrices it received from the client, the client can recover the binary vectors in the correct order, and combine them to obtain a solution to its original QUBO problem.

In the following, we describe in details: 1) how the client constructs obfuscated matrices to send; and 2) how the client reconstructs the final optimal solution $x$ from returned binary vectors.

\subsection{Obtaining the Matrices to Be Sent}

Following the operations below, the client computes $k$ matrices $M^\sigma _1$, $M^\sigma _2$,..., $M^\sigma _k$ $\in \mathbb{R}^{n\times n}$ from the model matrix $Q$, and sends them to the server in a random order, which can be completed in one single message transmission.

{\textit{1) Matrix normalization.} To enhance both privacy and efficiency in our approach, we normalize the model matrix $Q$ while preserving the relative ratio between its elements. This normalization process scales all elements to fall within the interval (-1, 1), with the element having the largest absolute value approaching either -1 or 1. Consequently, the integer parts of all elements become 0, simplifying our subsequent digit-wise splitting operation to only consider digits after the decimal point. By ensuring the element with the largest absolute value is near -1 or 1, we try to maximize the number of elements with a non-zero first decimal digit. This first digit is crucial as it contributes significantly to the overall value of each element. As a result, the matrix formed by these first digits holds the highest weight among all matrices derived from the radix decomposition. Its corresponding optimal binary vector also tends to carry the greatest weight in the subsequent weighted summation process. Maximizing non-zero first digits enhances the significance of both the corresponding matrix and its optimal binary vector, thereby improving result accuracy. Moreover, this normalization step helps to obscure the actual element values within the model matrix $Q$, thereby enhancing the security of our approach. Specifically, the client computes the normalized matrix $Q^*$ as:

\begin{equation}
    Q^*[i,j]:=\frac{Q[i,j]}{(1+\epsilon)max(Q)},\ \forall i,j \in \left\{ 1, ... , n \right\},   
\end{equation}
where $\max(Q)$ denotes the maximum absolute value of all elements in $Q$, $Q[i,j]$ and $Q^*[i,j]$ denote the $i$th-row-$j$th-column element of $Q$ and $Q^*$ resp., and $\epsilon \rightarrow 0^+$ is a small positive real number that is close to 0.

\textit{2) Matrix decomposition with base $r$.} The client represents (the non-integer part of) normalized model matrix $Q^*$ using radix $r$, takes all elements' most significant digits to form $M_1$, the second most significant digits as $M_2$, ... , and the $k$th digit as $M_k$:

\begin{equation}
    M_m[i,j]:=\text{the}\ m \text{th}\ \text{digit} \ \text{of} \ Q^*[i,j],
\end{equation}

\begin{center}
    $\forall m\in \left\{1,2, ... ,k \right\}, \forall i,j \in \left\{1,2, ... ,n \right\}$
\end{center}

\textit{3) Random matrix permutation.} The client generates $k$ random permutations $\sigma_1$, $\sigma_2$, ... , $\sigma_k$ and uses them to permute the rows and columns of $M_1$, $M_2$, ... , $M_k$ respectively to obtain $M_1^\sigma$, $M_2^\sigma$, ... , $M_k^\sigma$ as:

\begin{equation}
    M_m^\sigma[i,j]:=M_m[\sigma_m(i),\sigma_m(j)] 
\end{equation}

\begin{center}
    $\forall m \in \left\{1,2, ... ,k \right\}, \forall i,j \in \left\{1,2, ... ,n \right\}$
\end{center}

\textit{4) Sending in a random order.} To perform this, the client generates a random permutation $\sigma'$ and sends $M_{\sigma'(1)}^\sigma$, $M_{\sigma'(2)}^\sigma$, ... , $M_{\sigma'(k)}^\sigma$ in order.\newline

%An optional operation that can enhance privacy protection is to generate some random matrices and send them to the server as disturbances along with the real matrices (matrices obtained by digit-wise splitting and arranging the model matrices) to confuse the server. We hope that it will be difficult for the server to distinguish the real matrices from these fake (random) matrices, thus making it more difficult for the server to figure out the real matrices and recover the privacy. The sign matrices of these random matrices should be obtained by permuting the sign matrix of $Q$ in rows and columns, and the elements of the matrices can conform to some distribution (e.g., the same distribution as the real matrices) as desired, making it difficult for the server to distinguish them from the real matrices. The server receives all the matrices sent by the client and solves the QUBO and then sends the optimization solutions, binary vectors, corresponding to these matrices, to the client. The client receives the binary vectors and ignores the vectors corresponding to these random matrices, thus they do not affect the final optimal solution $x$.

\noindent\textbf{Note:} To further obfuscate the matrix $Q$, the client can also generate and send additional random matrices to the server along with the decomposed and permuted matrices derived from $Q$. These random matrices serve as decoys, making it more difficult for the server to identify or extract sensitive information about the original matrix $Q$. Upon receiving the solutions from the server corresponding to all the matrices, including the random ones, the client can simply discard the solutions related to the random matrices and focus only on the solutions pertaining to the decomposed and permuted matrices of $Q$. This extra layer of obfuscation further enhances the privacy protection of the sensitive data within the matrix $Q$. For presentation simplicity, we omit this step in our approach description.

\subsection{Reconstruction of the optimal solution}

After receiving $k$ obfuscated matrices from the client, the server sends back the corresponding $k$ optimization solutions or binary vectors to the client. We denote these binary vectors as $v_1^{\sigma'} $, $v_2^{\sigma'} $, ... , $v_k^{\sigma'} $, which are the solutions to the binary optimization for $M_{\sigma'(1)}^\sigma$, $M_{\sigma'(2)}^\sigma$, ... , $M_{\sigma'(k)}^\sigma$. The client uses inverse permutations of $\sigma_1,\sigma_2,\ldots, \sigma_k$ and $\sigma'$ to recover them from the server's response, and then sum them with corresponding weights. Finally, the client computes a probability distribution from the weighted sum vector, and samples the final optimal $x$ over the distribution. Specifically, the above operations are completed by executing the following steps:

\textit{1) Inversing permutations.} The client first permutes $v_1^{\sigma'} $, $v_2^{\sigma'} $, ... , $v_k^{\sigma'} $ using $\sigma^{'-1}$ (i.e. the inverse of $\sigma^{'}$) to get the correctly ordered $v_1^{\sigma} $, $v_2^{\sigma} $, ... , $v_k^{\sigma}$ which are binary optimization solutions to $M_1^\sigma$, $M_2^\sigma$,... , $M_k^\sigma$:

\begin{equation}
    v_m^\sigma:=v_{\sigma^{'-1}(m)}^{\sigma'} ,\ \forall m \in \left\{1,2, ...,k \right\} 
\end{equation}

Then the client permutes the elements of $v_1^{\sigma} $, $v_2^{\sigma} $, ... , $v_k^{\sigma}$ using $\sigma^{-1}_1$, $\sigma^{-1}_2$, ... , $\sigma^{-1}_k$ respectively to recover $v_1$, $v_2$, ... , $v_k$ which are binary optimization solutions to $M_1$, $M_2$, ... , $M_k$: 
\begin{equation}
    v_m[i]:=v_m^{\sigma}[\sigma^{-1}_m(i)] ,\ \forall m \in \left\{1,2, ... ,k \right\},\ \forall i\in \left\{1,2, ... ,n \right\}
\end{equation}

%\noindent They are binary optimization solutions to $M_1^\sigma$, $M_2^\sigma$,... , $M_k^\sigma$.They are binary optimization solutions to $M_1$, $M_2$, ... , $M_k$.

\textit{2) Weighted average calculation.}  The client sets a weight vector $w\in\mathbb{R}^k$ and computes the weighted average $\tilde{x}$ of $v_1$, $v_2$, ... , $v_k$ with weights of $w[1]$, $w[2]$, ... , $w[k]$ as: 

\begin{equation}
    \tilde{x}:=\frac{\sum^{k}_{m=1}{w[m]\cdot v_m}}{\sum^k_{m=1}{w[m]}}.
\end{equation}

\noindent$\tilde{x}$ is used as an estimation on the probability that each entry in the optimal $x$ equals 1. 

Intuitively, the matrices with smaller subscripts in $M_1$, $M_2$, ... , $M_k$ are split from higher digits in the model matrix $Q$ and contribute more to $x^tQx$, and the corresponding optimization solutions should also have larger weights in $x$. 
A possible $w$ is
\begin{equation}
    w[m]:=\frac{1}{r^m},\ \forall m \in \left\{1,2,... ,k \right\}.
\end{equation}
We will discuss the choice of $w$ again later in the accuracy analysis.

\textit{3) Sampling and selection.} Finally, the client independently samples $t$ vectors $X_1$, $X_2$, ... , $X_t$ from the distribution: 

\begin{equation}
    %X \stackrel{\$}{\leftarrow}[0,1]^n,Pr[X[i]=1]=\tilde{x}[i]
    X \stackrel{}{\leftarrow}[0,1]^n,Pr[X[i]=1]=\tilde{x}[i],
\end{equation}

%\noindent for multiple independent random samples $t$ times, such that $X \sim Ber(n,\tilde{x})$. $t$ $n$-order binary vectors, $X_1$, $X_2$, ... , $X_t$, are obtained and
 and selects the one that minimizes the objective function 
\begin{equation}
     \mathop{argmin}\limits_{x\in \left\{X_1,X_2,...,X_t \right\}} \ x^tQx
\end{equation}
 as the final optimal $x$.

\section{Performance Analysis}\label{sec:per_ana}

Our approach has the following principal parameters:
\begin{itemize}
\item $n$ - the order of the matrix, determined by the model;
\item $r$ - the radix or base in our numeral system, determined by the client;
\item $k$ - the total number of ``digit matrices'', determined by the client;
\item $t$ - the total number of random samplings on $\tilde{x}$, determined by the client.
\end{itemize}
In this section, we discuss the privacy protection, accuracy and efficiency of our approach and analyze the effects of the above parameters.

%\textbf{Security.} 
\subsection{Privacy protection}

First of all, due to the setup of the normalization step, $Q^*$ hides the element-specific values of $Q$ and only retains the proportionality between them, so neither the element-specific values of $Q$ nor the optimal $x^TQx$ values will be leaked. Therefore, there can only be leakage of the proportionality of the elements in $Q$, and we analyze the leakage of this below.

An optimistic, but without loss of generality, assumption is that the distribution of the digits of each element of $Q^*$ in each $r$-base position is indistinguishable from a uniformly independent distribution, such that the absolute values of the elements in $M_1$, $M_2$, ... , $M_k$ are also uniformly independently distributed. In this case, the server can only estimate a permutation by indices of the sign matrix of $Q$ based on the sign matrices of any of the received matrices. It is discussed next whether the server can get more valuable information from the received matrices $M_{\sigma'(1)}^\sigma$, $M_{\sigma'(2)}^\sigma$, ... , $M _{\sigma'(k)}^\sigma$.

\begin{theorem}
If the absolute values of all digit matrices' elements follow independently uniform distribution over the digit value space,  
%the probability that the server recovers the permuted digital matrices $\{M_{\delta'(m)}\}$ (with correct rows and columns) is $\frac{1}{\alpha^{k-1}k!}$, and 
the probability of recovering a correct $Q^*$ is  $\frac{1}{\alpha^{k-1}k!n!}$, where $\alpha$ denotes the number of automorphisms of the sign matrix of $Q$ under the row/column index permutation transformation.
\end{theorem}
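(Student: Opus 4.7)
I would approach this as a counting argument over sign-pattern--consistent reconstructions, combined with a genericity claim drawn from the uniformity hypothesis. Let $S=\mathrm{sign}(Q^*)$. Under i.i.d.\ uniform magnitudes, the only structural information the server can extract from $N_p:=M^\sigma_{\sigma'(p)}$ beyond uniform noise is its sign pattern $S_p$, which is necessarily a row/column permutation of $S$.

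The first substantive step is a combinatorial count. A candidate reconstruction is parameterised by an ordering $\tau'$ and permutations $\tau_1,\ldots,\tau_k$, giving $\hat M_m = P_{\tau_m} N_{\tau'^{-1}(m)} P_{\tau_m}^T$ and $\hat Q^* = \sum_m r^{-m}\hat M_m$. For this to be a legitimate decomposition the $\hat M_m$ must share a single sign pattern. The count then proceeds by orbit--stabiliser: the ordering contributes $k!$; the permutation $\tau_1$ of the first matrix is free with $n!$ choices and pins down the target sign pattern; and for each $m>1$ the set of $\tau_m$ mapping $S_{\tau'^{-1}(m)}$ into that pattern is a coset of $\mathrm{Aut}(S)$, so has size exactly $\alpha$. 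This yields $\alpha^{k-1}\cdot k!\cdot n!$ sign-consistent candidates.

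The second step is to show that, almost surely with respect to the client's random choices, exactly one of these candidates equals the true $Q^*$, whence a uniform guess over the consistent space succeeds with probability $\tfrac{1}{\alpha^{k-1}k!n!}$. The main obstacle is this genericity claim: under the uniformity hypothesis, distinct $M_m$'s are almost surely not permutation-equivalent, which forces $\tau'=\sigma'$ at the ordering level; and each $M_m$ almost surely satisfies $\mathrm{Aut}(M_m)=\{e\}$, so that among the $\alpha$ sign-consistent choices of $\tau_m$ exactly one, namely $\tau_m=\sigma_m$, yields the true $M_m$. Ruling out such accidental magnitude coincidences is precisely where the i.i.d.\ uniform hypothesis is doing real work; the orbit--stabiliser count by contrast is routine once the sign-pattern reduction is in place.
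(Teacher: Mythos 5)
Your proposal is correct and follows essentially the same route as the paper's proof: reduce the server's usable information to the common sign pattern of the digit matrices, count the sign-consistent reconstructions as $n!$ choices for the first permutation times $\alpha^{k-1}$ coset-sized choices for the remaining $k-1$ times $k!$ orderings, and invoke indistinguishability to get probability $\frac{1}{\alpha^{k-1}k!n!}$. Your explicit genericity step (that the true candidate is almost surely unique in the consistent set, via trivial automorphisms of the $M_m$ and non-equivalence across digit places) is a welcome tightening that the paper leaves implicit, but it does not change the argument's structure.
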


\begin{proof}
Note that before the random permutations on rows and columns, all $k$ digital matrices $\{M_m\}_{m=1,\ldots, k}$ share the same sign matrix (i.e. the matrix formed by all elements' signs) with the $Q$ or $Q^*$. Therefore the correct digital matrices can be recovered only by a combination of permutations $\{\sigma^*_m\}_{m=1,\ldots,k}$ such that applying them on the permuted matrices $\{ M^\sigma_m\}_{m=1,\ldots,k}$ respectively would make their sign matrices become identical. Let $\Sigma$ be the set of all such combinations.

Since the absolute values of the elements in digital matrices are uniformly and independently distributed, the server is not able to distinguish between the correct combination of permutations (i.e. $\{\sigma^{-1}_m\}_{m=1,\ldots,k}$ ) that recovers the true digital matrices and other incorrect combination of permutations in $\Sigma$. %Denote by $\alpha$ the number of automorphisms of the sign matrix of $Q$ under the row/column index permutation transformation. 

For any $\sigma^*_1$, there are $\alpha ^{k-1}$ possible $(\sigma^*_2, ... ,\sigma^*_k)$ that the sign matrices of the resulting matrices are the same after permutation. 

Therefore, we know the size of $\Sigma$ equals $\alpha^{k-1}n!$, and the probability that the server correctly guesses $\{\sigma^{-1}_m\}_{m=1,\ldots,k}$ and recovers $\{M_{\delta'(m)}\}$ that have correct rows and columns equals $\frac{1}{\alpha^{k-1}n!}$. 

Finally, there are $k!$ possible orders that the digit matrices are sent with. Therefore, the probability that the server recovers the correct $Q^*$ equals 
\begin{center}
    $\frac{1}{\alpha^{k-1}k!n!}$.
\end{center}

%Similarly, for any permutation of $\left\{1,2,.... ,k \right\}$, $\sigma^{**}$, the server is also unable to distinguish it from $\sigma'$ (used in step 4 of \textit{A} of the algorithm), which gives the server only a probability of $\frac{1}{k!}$ to choose the correct $\sigma^{**}$. To summarize, for any permutation $\sigma^{**}$ of $\left\{1,2,...,k \right\}$ and a combination of permutations $(\sigma^*_1, \sigma^*_2, ... ,\sigma^*_k)$ such that the matrices received by the server have the same sign matrices after the permutations, after applying them to the indexes of the matrices received by the server and to the indices of the elements of the matrices, the server will not be able to determine whether the new matrices obtained are correct or not, that is, whether they are the results obtained for the elements of $M_1$, $M _2$, ... , $M_k$ under an identical permutation. 

%Therefore, the probability that the server recovers a correct permuted $Q^*$ (with $k$ decimal places, hereafter) is $\frac{1}{\alpha^{k-1}k!}$, and the probability of recovering a correct $Q^*$ is 

% \begin{center}
%     $\frac{1}{\alpha^{k-1}k!n!}$
% \end{center}

% \noindent which is negligible. More generally, the probability of recovering the correct $Q^*$ with $k' \ (k'\leq k)$ decimal places is 

% \begin{center}
%     $\frac{1}{\alpha^{k'-1}A(k,k')n!}$
% \end{center}

\end{proof}

In the worst-case scenario, where the absolute values of the elements in \( M_1 \), \( M_2 \), and \( M_k \) are not uniformly and independently distributed, we can employ the optional privacy-enhancing method mentioned earlier. By sending the server specific confusion matrices, we ensure that the matrices appear to match a uniformly independent distribution or another nonsensical distribution. This approach is effective but requires contextual analysis.

When a large \( k \) is not necessary, we can achieve the privacy preservation effect of a larger \( k \) by sending more confusion matrices in this manner.

From the analysis above, we find that larger values of \( n \) and \( k \) improve privacy performance. With a fixed \( n \), a larger \( k \) means the server receives more matrices related to the model, which intuitively provides more information. However, paradoxically, it becomes more challenging to recover the private data. This suggests that increasing the amount of information increases the difficulty for the server to organize and analyze it, due to the larger probability space created by more matrices.

Although the parameter \( r \) does not appear in the equation, its impact on privacy is significant. As discussed, ``recovering the first \( k' (k' < k) \) correct digits of \( Q^* \)" shows that a larger \( r \) provides more space for the first \( k' \) digits, thus containing more information. In an extreme case, where \( r \) is very large, \( M_1 \) alone may contain almost all the information in \( Q \). Conversely, a smaller \( r \) makes the distribution of each element look more uniform and independent, thereby enhancing privacy performance. Therefore, smaller values of \( r \) improve privacy protection.

\subsection{Accuracy} 
%First, we make a few qualitative observations on the impact of parameters. The larger $t$ is, the smaller the error is, because the greater the number of samples, the greater the probability of sampling a better binary vector. The growth of $k$ reduces the error up to limit and has a marginal effect, since too small a number of matrices contains only the higher digits, which may not provide enough information, while too large a number of matrices contains too many lower digits, which have an extremely limited effect on the results and themselves are seemingly more random. The larger the $r$, the smaller the error. the larger the base, the more information is contained in each digit, and in one extreme case $r$ is large enough that sending only the first matrix to the server gives a sufficiently accurate result.
First, we make several qualitative observations regarding the impact of various parameters. A larger 
$t$ results in a smaller error because a greater number of samples increases the likelihood of obtaining a better binary vector. The parameter $k$ influences the error up to a point and then has a marginal effect. A too-small number of matrices primarily contain higher digits, potentially lacking sufficient information, whereas a too-large number of matrices include many lower digits, which contribute minimally to the results and appear more random. Increasing 
$r$ reduces the error, as a larger base encapsulates more information in each digit. In an extreme case, if $r$ is sufficiently large, sending only the first matrix to the server can yield an adequately accurate result.

We have conducted numerous experiments demonstrating that our approach achieves excellent accuracy (see Section~\ref{sec:eva_acc} for details). This can be intuitively understood: higher digits contribute more significantly to the model matrix \( Q \), thus having a greater impact on \( x^tQx \) and the optimal \( x \). The influence of a higher digit on the optimal \( x \) is reflected in the optimal solution of the matrix formed by that digit. The weight of this effect on the optimal \( x \) is captured in the final result through the weights we assign to that binary vector in the weighted average. Each lower digit reduces the weight in \( Q \) by a factor of \( r \), leading us to propose that its weight in the final result should also be reduced by a factor of \( r \). This is the basis for the generic weight vector \( w[i]:=\frac{1}{r^i},\ \forall i \in \left\{1,2,...,k \right\} \).

However, this weight vector \( w \) is not universally applicable. The selection of \( w \) should be related to the distribution of the elements in \( Q \). For example, when the absolute values of the non-zero elements in \( Q \) are similar in magnitude, focusing only on the highest position may suffice, and considering lower positions might reduce accuracy, especially when \( r \) is small. When the magnitude distribution of elements in \( Q \) is uncertain, determining an optimal \( w \) is challenging. Conversely, if the magnitude distribution is known, techniques such as multiple linear regression or deep learning can be employed to learn a more appropriate \( w \).

Given the complexity and uncertainty of this problem, providing a rigorous theoretical analysis of the accuracy is difficult. A more in-depth analysis of the choice of \( w \) and its impact on accuracy will be the focus of our future research.

\subsection{Efficiency}

\begin{table*}[htbp]
\caption{Computation, communication and storage costs}
\begin{center}
	\begin{tabular}{cccccc}
     \hline
		\multicolumn{1}{c}{\textbf{Communication}} & \multicolumn{1}{c}{\textbf{Rounds}} &\multicolumn{1}{c}{\textbf{Client Work}}  &  \multicolumn{1}{c}{\textbf{Server Work}} & \multicolumn{1}{c}{\textbf{Client Storage}}& \multicolumn{1}{c}{\textbf{Server Storage}}\\
		\multicolumn{1}{c}{$kn^2$} & \multicolumn{1}{c}{1} & \multicolumn{1}{c}{$kn^2+tn^2$} & \multicolumn{1}{c}{$k \ ^*$} &\multicolumn{1}{c}{$kn^2+tn$}&\multicolumn{1}{c}{$kn^2$}\\
    \hline
    \multicolumn{4}{l}{$^*$Here we count the overhead of the server solving a QUBO as $\mathcal{O}(1)$.}
	\end{tabular}
\end{center} 
\end{table*}

Our method demonstrates strong performance in terms of efficiency. For \textit{communication}, only one round is required: the client sends \( k \) \( n \)-order matrices to the server, and the server returns \( k \) \( n \)-dimensional binary vectors, resulting in a communication overhead of \(\mathcal{O}(kn^2)\). For \textit{computation}, the client needs to generate and permute \( k \) \( n \)-order matrices, which is \(\mathcal{O}(kn^2)\). After receiving the \( k \) \( n \)-dimensional binary vectors, the client computes their weighted average, which is \(\mathcal{O}(kn)\). Following this, the client samples \( t \) candidate vectors and computes the values of the corresponding quadratic forms, amounting to \(\mathcal{O}(tn^2)\). The sorting process to obtain the final result is \(\mathcal{O}(t \log t)\), which is typically subsumed under \(\mathcal{O}(tn^2)\). Thus, the total computational overhead for the client is \(\mathcal{O}(kn^2 + tn^2)\). The server, on the other hand, needs to solve QUBO for \( k \) \( n \)-order model matrices, with a total overhead of \(\mathcal{O}(k)\) assuming each solution takes \(\mathcal{O}(1)\) time.

For \textit{storage}, the client must store \(\mathcal{O}(k)\) \( n \)-order matrices and \(\mathcal{O}(t)\) \( n \)-dimensional vectors along with their corresponding quadratic values, leading to a total storage overhead of \(\mathcal{O}(kn^2 + tn)\). The server needs to store \( k \) \( n \)-order matrices and the corresponding \( k \) \( n \)-dimensional vectors, resulting in a total storage overhead of \(\mathcal{O}(kn^2)\).

As the number of transmitted matrices increases, the communication overhead increases linearly. Conversely, the difficulty for the server to learn information about the matrix \( Q \) increases exponentially.

\section{Evaluation on Accuracy}\label{sec:eva_acc}

To evaluate whether our approach achieves desirable accuracy, we conducted several experiments.

We randomly generated model matrices for testing under a normal distribution with a mean of 0 and a standard deviation of 4. Our observations indicate that optimizing these matrices is quite challenging, making them a rigorous test case. Given the current limitations in accessing a quantum annealer, we utilized the well-known commercial optimization solver IBM ILOG CPLEX to solve QUBO problems in our experiments. CPLEX is capable of providing optimal results at reasonably large problem scales~\cite{websiteCPLEX}.

We used the values of the optimization function \( x^tQx \) as the observed values and defined the accuracy of our solution as:
\[
   \text{acc} = \frac{\text{obtained value of } x^tQx}{\text{true optimal value of } x^tQx},
\]
and the error as:
\[
    \text{err} = \left| \frac{\text{obtained value of } x^tQx - \text{true optimal value of } x^tQx}{\text{true optimal value of } x^tQx} \right|.
\]
We focus on the difference between the obtained and true optimal values of the optimization function, as this difference is more critical than the difference between the obtained and true optimal solutions.

%We randomly generated the model matrices for testing under a normal distribution with expectation 0 and standard deviation 4, and from our observations it is more difficult to optimize on such matrices, making it an extremely rigorous test condition. We use IBM ILOG CPLEX for QUBO solving, which gives more reliable results \cite{websiteCPLEX}, but unfortunately this limits the scale of the experiments. We use the values of the optimization function $x^tQx$ as the observed values, i.e., $\frac{ \text{ the obtained value of $x^tQx$ } }{ \text {the true optimal value of $x^tQx $} }$ for accuracy and $| \frac{ \text{ the obtained value of $x^tQx$ }-\text {the true optimal value of $x^tQx $}  }{ \text {the true optimal value of $x^tQx $} } |$ for error, because the difference between the obtained and the true optimal values of optimization function is of more concern than the difference between the obtained and the true optimal solutions.

\begin{figure}[htbp]
\centerline{\includegraphics[width=0.8\columnwidth]{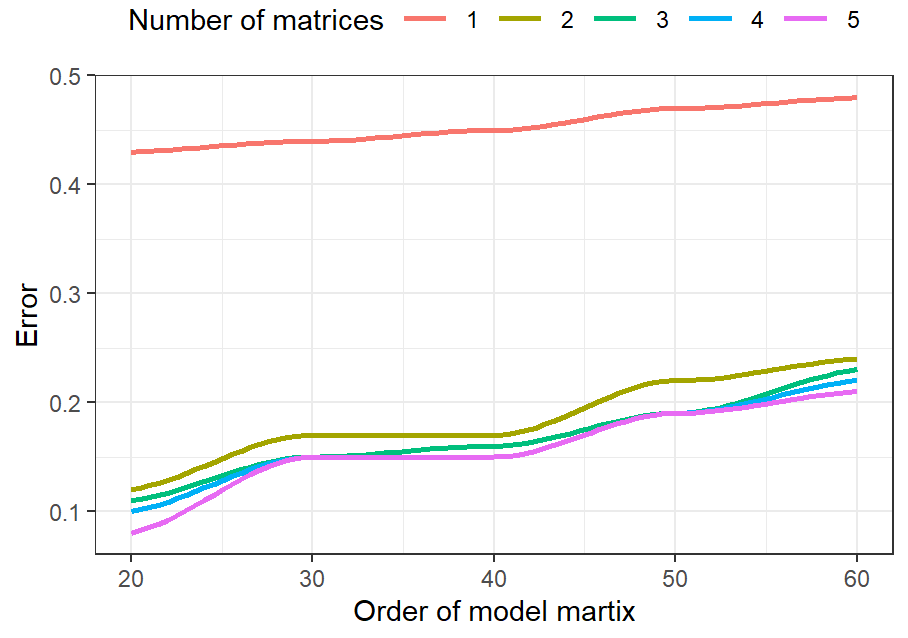}}
\caption{Error with different matrix orders and total numbers of matrices when sampling 200 times with a base of 2.}
\label{fig1}
\end{figure}

In Fig.~\ref{fig1}, we illustrate how the error varies as the matrix order increases. The results indicate that the error grows slowly with increasing order, following a trend that appears to be no more than linear, suggesting \(\mathcal{O}(n)\)-like behavior. These encouraging results imply that we can achieve sufficiently accurate solutions by adjusting other parameters, with an acceptable overhead. Fig.~\ref{fig1} also demonstrates the effect of increasing \( k \) on accuracy improvement, consistent with theoretical predictions. However, beyond a certain point, further increases in \( k \) offer diminishing returns in accuracy. Therefore, we can select a modest \( k \) and employ the transmission of confusion matrices to enhance security without significantly compromising accuracy.

%In Fig.~\ref{fig1} we show how the error varies as the matrix order increases, from which it can be seen that the error grows slowly with increasing order, at a rate that does not appear to be more than linear. Such $\mathcal{O}(n)$-like errors are encouraging, and mean that we can easily obtain a sufficiently accurate solution by an increase in the other parameters, while incurring an easily acceptable overhead. The effect of increasing $k$ on the accuracy improvement is also shown in Fig.~\ref{fig1}, matching the theory. At the same time, when $k$ increases up to a point, continuing to increase it does not help significantly in terms of accuracy. So in fact we can choose a not large $k$ and use the sending of confusion matrices to increase security.

\begin{figure}[htbp]
\centerline{\includegraphics[width=0.8\columnwidth]{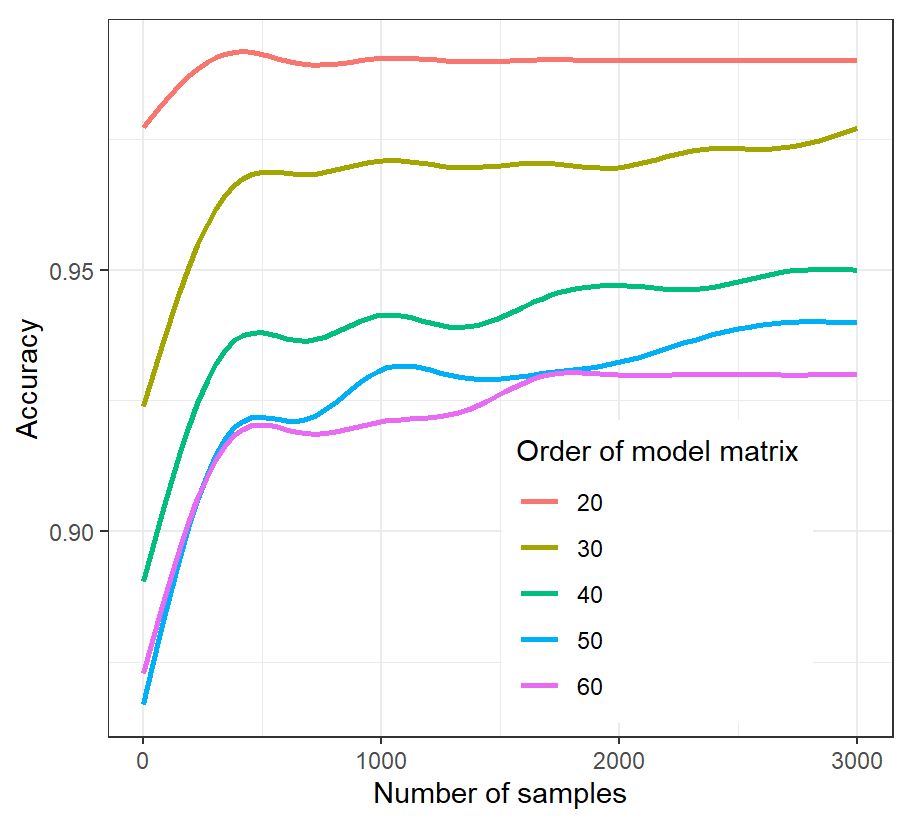}}
\caption{Accuracy with different total numbers of samples and orders of model matrix when the number of matrices is 5 with a base of 4.}
\label{fig2}
\end{figure}

Fig.~\ref{fig2} demonstrates the relationship between accuracy and the number of samples. As depicted, accuracy improves with an increasing number of samples. For each order of test matrices utilized, with 5 matrices and a base of 4, we achieved accuracy exceeding 90\% within 300 samples—a notably low overhead given the complexity of the problem. However, it's also evident that beyond a certain number of samples, the rate of accuracy improvement diminishes, indicating that achieving very high accuracy solely by increasing the number of random samples may still be challenging. The current method of random sampling is suboptimal, and we plan to explore more efficient techniques to derive the final \( x \) from the \([0,1]^n\) vector \(\tilde{x}\) at a reduced cost in the future.

%Fig.~\ref{fig2} reflects the increase in accuracy with the number of samples, and it can be seen that the accuracy climbs as the number of samples increases. In each of the orders of test matrices we used, when the number of matrices was 5 and the base was 4, results with accuracy greater than 90\% were obtained within 300 samples, which is a very low overhead compared to the difficulty of the problem itself. However, we also note that after a certain level of sample number, the rate of improvement in accuracy becomes much slower, which means that it may still be difficult to obtain a very high accuracy just by increasing the number of random samples. The current way of random sampling is not optimal, and we will look for more skillful ways to get the final $x$ from the $[0,1]^n$ vector $\tilde{x}$ at a smaller cost later.

\begin{figure}[htbp]
\centerline{\includegraphics[width=0.8\columnwidth]{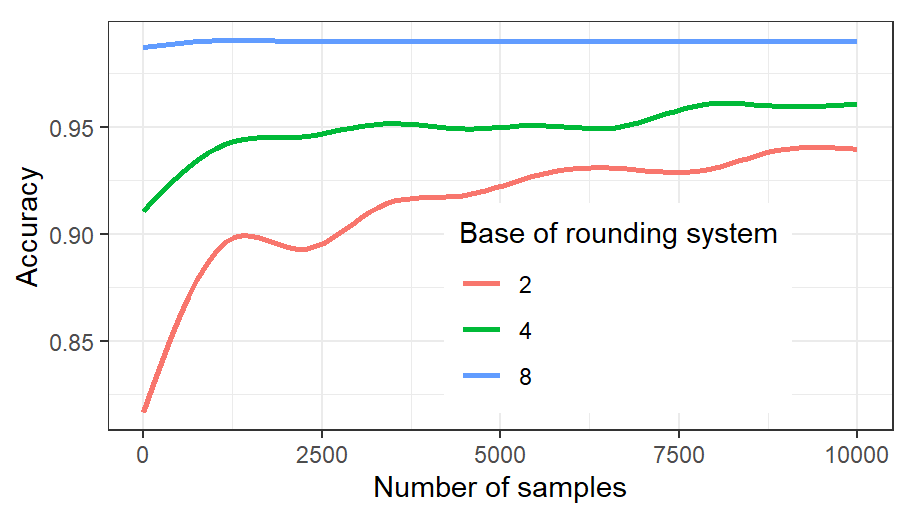}}
\caption{Accuracy with different total numbers of samples and bases of numeral/rounding system when the number of matrices is 5 and the order of model matrix is 40.}
\label{fig3}
\end{figure}

Fig.~\ref{fig3} illustrates the increase in accuracy as the base, \( r \), grows. The base is a flexible and influential parameter, allowing for significant accuracy improvements with its increase. However, this increase also comes at a cost to security, highlighting the trade-off between security, privacy, and overhead.

%Fig.~\ref{fig3} illustrates the increase in accuracy as the base grows. The base, $r$, is a flexible and influential parameter, and we can significantly improve the accuracy by increasing it. However, this can be equally costly in terms of security, which reflects the trade-off made between security, privacy, and overhead.

\section{Related Work}\label{sec:rel_wor}

%A considerable amount of research has been conducted on privacy-preserving  optimizations~\cite{huang2015differentially,han2016differentially,shoukry2016privacy,8262869,7347330,8430960,zhang2022quadratic,han2009privacy,toft2009solving,Salinas2016}, none of these works are well-suited for our problem due to the fact that these works are designed for quantum annealers are fundamentally different from general-purpose computation and continuous optimization, along with the fact that the QUBO problem is sensitive to errors. How to solve the QUBO problem with privacy preservation remains an open problem. 

Many quantum computing research projects focus on developing general quantum computing platforms for new applications, such as quantum machine learning~\cite{wu2024distributed, wu2022scalable, wu2023more}. This paper, however, examines quantum annealers, which are specialized quantum computers.

A considerable amount of research has been conducted on privacy-preserving methods for classical optimization. Notable works include~\cite{huang2015differentially,han2016differentially,shoukry2016privacy,8262869,7347330,8430960,zhang2022quadratic,han2009privacy,toft2009solving,Salinas2016}. However, these methods are not well-suited for our specific scenario, as quantum annealers differ fundamentally from general-purpose computation and continuous optimization, which these methods are designed for.% Additionally, the Quadratic Unconstrained Binary Optimization (QUBO) problem is particularly sensitive to errors, making privacy-preserving solutions for QUBO an open problem.

%The gradient descent method is used in \cite{han2009privacy,8262869,shoukry2016privacy,zhang2022quadratic,8430960} and the simplex algorithm is used in \cite{toft2009solving}. On the one hand, the aforementioned methods are only suitable for continuous or linear optimization and are not applicable to discrete quadratic optimization. On the other hand, we are using a quantum annealer for solving the problem, which does not support general-purpose computation, thus preventing us from adopting these methods. 

Specifically, the gradient descent method are considered in~\cite{han2009privacy,8262869,shoukry2016privacy,zhang2022quadratic,8430960}, while the simplex algorithm is studied in~\cite{toft2009solving}. However, the methods proposed in these works are designed for continuous or linear optimization and are not applicable to discrete quadratic optimization. Additionally, the quantum annealer does not support general-purpose computation, making it impossible to adopt these methods.

The works~\cite{7347330,Salinas2016} achieve privacy preservation through matrix transformations; however, their methods are only applicable to continuous optimization problems, while we are dealing with a discrete optimization problem.

%\cite{7347330,Salinas2016} achieve privacy preservation through matrix transformations, and their method is only applicable for continuous optimization problems, whereas quantum annealer is currently only suitable for discrete optimization. 

A few works~\cite{huang2015differentially,han2016differentially} adopt differential privacy mechanisms~\cite{dwork2006differential} to protect individual participant's model matrix privacy in a distributed joint optimization task where the objective function to be minimized corresponds to the sum of all participants' model matrices. Although differential privacy can protect individual privacy, it offers little protection to the overall model matrix. Therefore, it cannot solve our problem.

%mechanisms, but this leads to a large loss of accuracy, especially when a high level of security is required. 

\section{Conclusions and Future Work}\label{sec:con_fut}

In this work, we propose a novel approach that enables privacy-preserving access to a remote quantum annealer for solving QUBO optimization problems. Our approach balances privacy, accuracy, and efficiency, allowing for trade-offs among these aspects by adjusting relevant parameters. Theoretical analysis demonstrates that our method effectively conceals the values of the elements in the model matrix, with the probability of an adversary accurately determining the ratio between them being negligible. Experimental evaluation shows that, with 5 matrices and a base of 4, our method achieves over 90\% accuracy within 300 samples.

Future research will focus on refining the weight vector selection process, developing more sophisticated methodologies for deriving final results, establishing theoretical bounds on accuracy, and testing our approach on a broader range of existing QUBO benchmarks for large-scale evaluation.

\section*{ACKNOWLEDGMENT}
The authors would like to thank all the reviewers for their
helpful comments. Qun Li was supported in part by the Commonwealth Cyber
Initiative (cyberinitiative.org).

\newpage

\section*{Appendix: Algorithm}\label{app}

%\subsection{Algorithm subroutines}

\begin{algorithm}
\caption{Split and Permute the Model Matrix}\label{alg1}
\begin{algorithmic}

\Ensure \\
\begin{itemize}
    \item  $Q$ the model matrix
    \item  $k$ the number of matrices to be send
    \item  $r$ the base of rounding system
\end{itemize}    
\Require \\
\begin{itemize}
    \item  $\left\{M^\sigma_{\sigma'(1)}, M^\sigma_{\sigma'(2)}, ...,M^\sigma_{\sigma'(k)} \right\}$ the $k$ matrices to be sent to the server
    \item $\left\{ \sigma_1, \sigma_2, ... ,\sigma_k  \right\}$ the random permutations for the indices of the $k$ matrices respectively
    \item $\sigma'$ the random permutation for the indexes of the $k$ matrices
\end{itemize}    
\\
\For{ $i,j \in \left\{ 1, ... , n \right\}$ }
    \State  $ Q^*[i,j] \gets \frac{Q[i,j]}{|(1+\epsilon)max(Q)|}$, $\epsilon \rightarrow 0^+ $, $max(Q)$ is the element of $Q$ with the largest absolute value
\EndFor
\For{ $i,j \in \left\{1,2, ... ,n \right\},m \in \left\{1,2, ... ,k \right\}$  }
    \State  $ M_m[i,j]\gets \text{the}\ m \text{th} \ \text{digit} \ \text{of} \ Q^*[i,j] $ 
\EndFor 
\State Generate $ \sigma_1$, $\sigma_2$, ... , $\sigma_k $, $k$ random permutations of $\left\{1,2,...,n \right\}$ 
\For{$i,j \in \left\{1,2, ... ,n \right\},m \in \left\{1,2, ... ,k \right\}$}
    \State  $M_m^\sigma[i,j]:=M_m[\sigma_m(i),\sigma_m(j)]$
\EndFor
\State Generate $ \sigma'$, random permutation of $\left\{1,2,...,k \right\}$
\State Permute the indexes of the matrices $M_1^\sigma$, $M_2^\sigma$, ... , $M_k^\sigma$ using $\sigma'$
\\ \Return $\left\{M^\sigma_{\sigma'(1)}, M^\sigma_{\sigma'(2)}, ...,M^\sigma_{\sigma'(k)} \right\}$, $\left\{ \sigma_1, \sigma_2, ... ,\sigma_k  \right\}$, $\sigma'$ 

\end{algorithmic}
\end{algorithm}

\begin{algorithm}
\caption{Obtain the Final Result}\label{alg2}
\begin{algorithmic}
\Ensure \\
\begin{itemize}
    \item $ \left\{ v_1^{\sigma'} , v_2^{\sigma'} , ... , v_k^{\sigma'} \right\}$ the binary vectors send back by the server
    \item $t$ the number of samples
    \item $w$ the weight vector
    \item $\left\{ \sigma_1, \sigma_2, ... ,\sigma_k  \right\}$ the random permutations for the indices of the $k$ matrices respectively
    \item $\sigma'$ the random for the indexes of the $k$ matrices
\end{itemize}    
\Require \\
\begin{itemize}
    \item  $x$ the final result as a solution to the optimization problem
\end{itemize}    
\\
\For{ $m \in \left\{1,2, ...,k \right\}$ }
    \State  $v_m^\sigma \gets v_{\sigma^{'-1}(m)}^{\sigma'}$
\EndFor
\For{ $i \in \left\{1,2, ... ,n \right\},\  m\in \left\{1,2, ... ,k \right\}$  }
    \State  $v_m[i] \gets v_m^{\sigma}[\sigma^{-1}_m(i)]  $ 
\EndFor 
\State  $ \tilde{x} \gets \frac{\sum^{k}_{m=1}{w[m]\cdot v_m}}{\sum^k_{m=1}{w[m]}}$ 
\For{$i \in \left\{1,2, ... ,n \right\}$,\ $l \in \left\{1,2, ... ,t \right\}$}
    \State  $X_l \stackrel{\$}{\leftarrow}[0,1]^n,Pr[X_l[i]=1]=\tilde{x}[i]$
\EndFor
\\ 
\Return $\mathop{argmin}\limits_{x\in \left\{X_1,X_2,...,X_t \right\}} \ x^tQx$
\end{algorithmic}
\end{algorithm}

%The entire process of our approach is in the algorithm 3.
\begin{algorithm}
\caption{Main Algorithm}\label{alg3}
\begin{algorithmic}

\State \textbf{1.} (on the client) To solve QUBO for $Q$, run $Algo1(Q,k,r)$, sending $M^\sigma_{\sigma'(1)}$, $M^\sigma_{\sigma'(2)}$, ... , $M^\sigma_{\sigma'(k)}$ to the server.
\State \textbf{2.} (on the server) After receiving $M^\sigma_{\sigma'(1)}$, $M^\sigma_{\sigma'(2)}$, ... , $M^\sigma_{\sigma'(k)}$, solve the optimization problems for them respectively, and send the corresponding optimization solutions $  v_1^{\sigma'}$ , $v_2^{\sigma'}$ , ... , $v_k^{\sigma'} $ to the client.
\State \textbf{3.} (on the client) After receiving $  v_1^{\sigma'}$ , $v_2^{\sigma'}$ , ... , $v_k^{\sigma'} $, run $Algo2( \left\{ v_1^{\sigma'} , v_2^{\sigma'} , ... , v_k^{\sigma'} \right\},t,w,\left\{ \sigma_1, \sigma_2, ... ,\sigma_k  \right\},\sigma')$, obtaining the final result.

\end{algorithmic}
\end{algorithm}

\bibliographystyle{IEEEtran}
\bibliography{reference}

% \bibliographystyle{IEEEtran}
% \bibliography{reference}

\end{document}